\newtheorem{thm}{Theorem}
\newtheorem{defn}{Definition}
\newtheorem{lemma}{Lemma}
\newtheorem{rk}{Remark}
\numberwithin{equation}{section} \setcounter{tocdepth}{1}
\newcommand{\bea}{\begin{eqnarray}}
	\newcommand{\eea}{\end{eqnarray}}
\newcommand{\Z}{\mathbb{Z}}
\def\O{\Omega}
\def\O{\Omega}
\def\L{\Lambda}
\def\Z{\mathbb{Z}}
\def \L {\Lambda}
	\title{Gradient Gibbs measures of an SOS model with alternating magnetism on Cayley trees}
	\author[1]{N.N. Ganikhodjaev}
	\author[2] {N. M. Khatamov}
	\author[3] {U.A. Rozikov}
\affil[1]{ V.I.Romanovskiy Institute of Mathematics,  9, Universitet str., 100174, Tashkent, Uzbekistan.
	
		 E-mail: nasirgani@yandex.ru} 
\affil[2]{V.I.Romanovskiy Institute of Mathematics,  9, Universitet str., 100174, Tashkent, Uzbekistan.
	
	 E-mail: nxatamov@mail.ru} 
\affil[3]{
	V.I.Romanovskiy Institute of Mathematics,  9, Universitet str., 100174, Tashkent, Uzbekistan\\
	
		Central Asian University, 264, Milliy bog St, 111221, Tashkent,  Uzbekistan\\
		
		National University of Uzbekistan,  4, Universitet str., 100174, Tashkent, Uzbekistan.
		
		 E-mail: rozikovu@yandex.ru}
\begin{document}
\maketitle
	\begin{abstract}
	
	The work is devoted to gradient Gibbs measures (GGMs) of a SOS model with countable set $\mathbb Z$ of spin values and having alternating magnetism on Cayley trees. This model is defined by a nearest-neighbor gradient interaction potential. Using K\"ulske-Schriever argument, based on boundary law equations, we give several $q$-height-periodic translations invariant GGMs for $q=2,3,4$. 
\end{abstract}

{\bf{Key words.}} {\em SOS model, configuration, Cayley tree,
Gibbs measure, gradient Gibbs measures, boundary law}.

{\bf Mathematics Subject Classifications (2010).} 82B26 (primary);
60K35 (secondary)

\section{Introduction}
It is known that macroscopic physical systems in thermodynamic equilibrium are
described by Gibbs measures, that are defined through a Hamiltonian of the system \cite{Ge}. A given Hamiltonian may have several distinct Gibbs measures; in that case, one says that there is phase coexistence. This corresponds to the first-order phase transitions, when several phases are thermodynamically stable at the same value of the relevant parameters (temperature, pressure, etc.)

For models (Hamiltonians) defined on low dimensional lattices,  it is known (see \cite{FV, FS, Ve}) non-existence of infinite-volume Gibbs measures. However, pinning a field at the origin suffices to
have a well-defined thermodynamic limit in any dimension (\cite{FV, Ve}). The resulting field
is completely characterized by its gradients. These infinite-volume random fields of gradients
are called gradient Gibbs measures, or Funaki-Spohn states \cite{FS} 
(for detailed motivations and very recent results see \cite{BiKo, BEvE}, \cite{HR} - \cite{KS}, \cite{Ra, Sh, Z1}).

This paper is devoted to GGMs of SOS model with alternating magnetism and spin values from the set of all integers  on Cayley trees.

\subsection{Definitions and known results}
Let $\Gamma^k= (V , L)$ be the uniform Cayley tree, where each vertex has $k + 1$ neighbors with $V$ being the set of vertices and $L$ the set of edges (\cite{Ro}).

For a fixed point $x^0\in V$,
$$W_n=\{x\in V\,| \, d(x,x^0)=n\},$$ 
where $d(x,y)$ is the distance between
vertices $x$ and $y$ on a Cayley tree, i.e.,
the number of edges of the shortest path connecting  $x$ and $y$.

 Denote by
$S(x)$ the set of direct successors of $x$, i.e., if $x\in W_n$, then
$$S(x)=\{y_i\in W_{n+1} \, |\,  d(x,y_i)=1, i=1,2,\ldots, k \}.$$

We consider models where spin-configuration $\omega$ is a function from the
vertices of the Cayley tree $\Gamma^k=(V, \vec L)$  to the set $\Z$ of integer numbers, where
$\vec L$ is the set of oriented edges (bonds) of the tree
(see Chapter 1 of \cite{Ro} for properties of the Cayley tree).

For any configuration  $\omega = (\omega(x))_{x \in V} \in \mathbb Z^V$ and edge $b = \langle x,y \rangle$ of $\Gamma^k$
the \textit{difference} along the edge $b$ is given by $\nabla \omega_b = \omega(y) - \omega(x)$, where $\omega_b$ is a configuration on  $b = \langle x,y \rangle$, i.e., $\omega_b=\{\omega(x), \omega(y)\}$. The gradient spin variables are defined by $\eta_{\langle x,y \rangle} = \omega(y) - \omega(x)$ for each $\langle x,y \rangle$ (see \cite{KS}, \cite{K}).

The space of \textit{gradient configurations} is denoted by $\O^\nabla$. The measurable structure on the space $\Omega^{\nabla}$ is given by $\sigma$-algebra $$\mathcal{F}^\nabla:=\sigma(\{ \eta_b \, \vert \, b \in \vec L \}).$$

Let $\mathcal{T}_{\Lambda}^{\nabla}$ be the sigma-algebra of gradient configurations outside of the finite volume $\Lambda$ is generated by all gradient variables outside of $\Lambda$ and the relative height-difference on the boundary of $\Lambda$. 

For nearest-neighboring interaction  potential $\Phi=(\Phi_b)_b$, where
$b=\langle x,y \rangle$ is an edge,  define symmetric transfer matrices $Q_b$ by
\begin{equation}\label{Qd}
	Q_b(\omega_b) = e^{- \Phi_b(\omega_b)}=\exp\left(-\beta V(\omega(x)-\omega(y))\right).
\end{equation}
Such potential $\Phi$ is called
a \textit{gradient interaction potential}.  

Following \cite{KS} and \cite{K} let us give definition of GGM.
Define the Markov (Gibbsian) specification as
$$
\gamma_\Lambda^\Phi(\sigma_\Lambda = \omega_\Lambda | \omega) = (Z_\Lambda^\Phi)(\omega)^{-1} \prod_{b \cap \Lambda \neq \emptyset} Q_b(\omega_b).
$$

\begin{defn}  The gradient Gibbs specification is defined as the family of probability kernels $\left(\gamma_{\Lambda}^{\prime}\right)_{\Lambda \Subset V}$ from $\left(\Omega^{\nabla}, \mathcal{T}_{\Lambda}^{\nabla}\right)$ to $\left(\Omega^{\nabla}, \mathcal{F}^{\nabla}\right)$ such that
	$$
	\int F(\rho) \gamma_{\Lambda}^{\prime}(d \rho \mid \zeta)=\int F(\nabla \varphi) \gamma^\Phi_{\Lambda}(d \varphi \mid \omega)
	$$
	for all bounded $\mathcal{F}^{\nabla}$-measurable functions $F$, where $\omega \in \Omega$ is any height-configuration with $\nabla \omega=\zeta$.
\end{defn}
\begin{defn} 
	A probability measure $\nu$ on $\Omega^{\nabla}$ is called a GGM if it satisfies the $DLR$ equation
	$$
	\int \nu(d \zeta) F(\zeta)=\int \nu(d \zeta) \int \gamma_{\Lambda}^{\prime}(d \tilde{\zeta} \mid \zeta) F(\tilde{\zeta})
	$$
	for every finite $\Lambda \subset V$ and for all bounded functions $F$ on $\Omega^{\nabla}$. 
\end{defn}

Now we define \emph{boundary laws} (see \cite{Z1}) which allow to describe the subset of $\mathcal{G}(\gamma)$ of all Gibbs measures.

\begin{defn}\label{def} 
	
	\begin{itemize}
		\item	A family of vectors $\{ l_{xy} \}_{\langle x,y \rangle \in \vec L}$ with $l_{xy}=\left(l_{xy}(i) : i\in \Z\right) \in (0, \infty)^\Z$ is called a {\em boundary law for the transfer operators $\{ Q_b\}_{b \in \vec L}$} if for each $\langle x,y \rangle \in \vec L$ there exists a constant  $c_{xy}>0$ such that the consistency equation
		\begin{equation}\label{eq:bl}
			l_{xy}(i) = c_{xy} \prod_{z \in \partial x \setminus \{y \}} \sum_{j \in \Z} Q_{zx}(i,j) l_{zx}(j)
		\end{equation}
		holds for every $i \in \Z$.
		\item  A boundary law $l$ is said to be {\em normalisable} if and only if
		\begin{equation}\label{Norm}
			\sum_{i \in \Z} \Big( \prod_{z \in \partial x} \sum_{j \in \Z} Q_{zx}(i,j) l_{zx}(j) \Big) < \infty
		\end{equation} at any $x \in V$.
		
		\item 	A boundary law 	is called {\em $q$-height-periodic} (or $q$-periodic) if $l_{xy} (i + q) = l_{xy}(i)$
		for every oriented edge $\langle x,y \rangle \in \vec L$ and each $i \in \Z$.
	\end{itemize}
\end{defn}

It is known that there is an one-to-one correspondence between normalisable boundary laws
and tree-indexed Markov chains \cite{Z1}.

In \cite{HKLR},  \cite{HKa}, \cite{HKb}, \cite{KS}
some non-normalisable boundary laws are used to give GGM.

For $\Lambda\subset V$, fix  a site $w \in \Lambda$.
If the boundary law $l$ is assumed to be $q$-height-periodic, then take  $s \in \mathbb{Z}_q=\{0,1,\dots,q-1\}$ and define probability measure $\nu_{w,s}$ on $\mathbb{Z}^{\{b \in \vec L \mid b \subset \Lambda\}}$ by
$$
\nu_{w,s}(\eta_{\Lambda \cup \partial \Lambda}=\zeta_{\Lambda \cup \partial \Lambda})
$$
$$=Z^\Lambda_{w,s}\prod_{y \in \partial \Lambda} l_{yy_\L}\Bigl (T_q(
s+\sum_{b\in \Gamma(w,y)}\zeta_b)
\Bigr) \prod_{b \cap \Lambda \neq \emptyset}Q_b(\zeta_b),
$$
where $Z^\Lambda_{w,s}$ is a normalization constant,  $y_\Lambda$ denotes the unique nearest neighbor of $y$ in $\Lambda$, $\Gamma(w,y)$ is the unique path from $w$ to $y$
and $T_q: \mathbb{Z} \rightarrow \mathbb{Z}_q$ denotes the coset projection.

\begin{thm} \cite{KS}
	Let $(l_{\langle xy\rangle })_{\langle x,y\rangle  \in \vec L}$ be any $q$-height-periodic boundary law for some gradient interaction potential.
	Fix any site $w \in V$ and any class label $s \in \mathbb{Z}_q$. Then
	$$	\nu_{w,s}(\eta_{\Lambda \cup \partial \Lambda}=\zeta_{\L\cup\partial\L})
	$$
	\begin{equation}
		=Z^\Lambda_{w,s} \prod_{y \in \partial \Lambda} l_{yy_\L}\Bigl (T_q(
		s+\sum_{b\in \Gamma(w,y)}\zeta_b)
		\Bigr) \prod_{b \cap \Lambda \neq \emptyset}
		Q_b(\zeta_b)
	\end{equation}
	gives a consistent family of probability measures on the gradient space $\Omega^\nabla$.
	Here $\Lambda$ with $w \in  \L \subset V$ is any finite connected set,
	$\zeta_{\L\cup\partial\L} \in \Z^{\{b \in \vec L \mid b \subset (\L \cup \partial\L)\}}$ and $Z^\Lambda_{w,s}$ is a normalization constant.
\end{thm}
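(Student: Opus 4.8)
The plan is to verify directly that the measures $\nu_{w,s}$ form a projective (Kolmogorov-consistent) family: for finite connected sets $w\in\Lambda\subset\Lambda'$, the marginal of the $\Lambda'$-measure onto the gradient variables supported in $\Lambda\cup\partial\Lambda$ must coincide with the $\Lambda$-measure. Since $\Lambda$ and $\Lambda'$ are connected subtrees containing $w$, I would first reduce to an elementary extension by a single vertex: order the vertices of $\Lambda'\setminus\Lambda$ so that each is adjacent to the previously constructed set, obtaining a chain $\Lambda=\Lambda_0\subset\Lambda_1\subset\cdots\subset\Lambda_n=\Lambda'$ with $\Lambda_{m+1}=\Lambda_m\cup\{y\}$ for some $y\in\partial\Lambda_m$. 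It then suffices to prove consistency for one such step.

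For the elementary step $\Lambda'=\Lambda\cup\{y\}$, the new boundary is $\partial\Lambda'=(\partial\Lambda\setminus\{y\})\cup S(y)$, and the only gradient variables present in $\Lambda'\cup\partial\Lambda'$ but absent in $\Lambda\cup\partial\Lambda$ are those on the new edges $\langle y,y'\rangle$ with $y'\in S(y)$; these are exactly the variables to be summed out. I would use the path decomposition $\Gamma(w,y')=\Gamma(w,y)\cup\{\langle y,y'\rangle\}$, which gives $\sum_{b\in\Gamma(w,y')}\zeta_b=\big(\sum_{b\in\Gamma(w,y)}\zeta_b\big)+\zeta_{\langle y,y'\rangle}$. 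Writing $i$ for a height representative of $T_q\big(s+\sum_{b\in\Gamma(w,y)}\zeta_b\big)$ and reading $\zeta_{\langle y,y'\rangle}=j-i$, the product over $S(y)$ of the new boundary-law factors times the new transfer weights becomes, upon summing the new gradients over all of $\Z$,
$$
\prod_{y'\in S(y)}\ \sum_{j\in\Z} Q_{y'y}(i,j)\, l_{y'y}\big(T_q(j)\big).
$$
Here the $q$-height-periodicity of the boundary law lets me replace $l_{y'y}(T_q(j))$ by $l_{y'y}(j)$, after which the bracketed expression is precisely the right-hand side of the consistency equation \eqref{eq:bl} with $x=y$ and distinguished neighbour $y_\Lambda$, since $S(y)=\partial y\setminus\{y_\Lambda\}$. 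Therefore the whole product collapses to $c_{yy_\Lambda}^{-1}\,l_{yy_\Lambda}(i)$, which, up to the positive constant $c_{yy_\Lambda}$, is exactly the boundary-law factor attached to $y$ in the $\Lambda$-formula.

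What remains is the bookkeeping: the transfer factors $Q_b(\zeta_b)$ over edges meeting $\Lambda$, together with the boundary factors over $\partial\Lambda\setminus\{y\}$, are common to both expressions; the collapsed factor reproduces the missing factor at $y$; and the finitely many constants $c_{yy_\Lambda}$ produced in the chain are absorbed into the normalization, so that $Z^{\Lambda'}_{w,s}$ and $Z^{\Lambda}_{w,s}$ differ only by these constants. This yields the required marginalization identity and hence consistency. Beforehand I would record that each $\nu_{w,s}$ is a bona fide probability measure: its defining expression depends only on the gradient configuration and not on the auxiliary height representative precisely because $l$ is $q$-periodic, and finiteness of $Z^\Lambda_{w,s}$ follows from summability of the transfer weights against the finitely many values that $l_{yy_\Lambda}(\cdot)$ assumes on $\Z_q$.

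The step I expect to be the main obstacle is controlling the interplay between the coset projection $T_q$ and the gradient-to-height change of variables. One must be sure that summing each new gradient over the full lattice $\Z$ --- rather than over $\Z_q$ --- is exactly what produces the complete inner sum $\sum_{j\in\Z}$ demanded by \eqref{eq:bl}, while it is the periodicity of $l$ that renders the boundary factor insensitive to the choice of representative. Keeping the telescoping of path sums consistent across successive elementary extensions, so that the argument of the boundary law at each newly exposed vertex is the old argument shifted by exactly the gradient on the edge just added, is the delicate accounting on which the entire argument rests.
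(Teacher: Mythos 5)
Your proposal is correct, but note that the paper itself offers no proof of this statement: it is quoted verbatim from K\"ulske--Schriever \cite{KS} (their Theorem 3.1), so there is no in-paper argument to compare against. Your route --- reducing to a single-vertex extension $\Lambda'=\Lambda\cup\{y\}$, summing the new gradient variables over $\Z$, and using the $q$-height-periodicity of $l$ together with the consistency equation \eqref{eq:bl} at $x=y$ to collapse the product over $S(y)$ into $c_{yy_\Lambda}^{-1}\,l_{yy_\Lambda}(i)$, absorbing constants into $Z^{\Lambda}_{w,s}$ --- is precisely the mechanism of the original proof in \cite{KS}, including the key point you flag, that summing gradients over all of $\Z$ (not $\Z_q$) produces the full inner sum of \eqref{eq:bl} while periodicity makes the boundary factor independent of the height representative.
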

The measure $\nu_{w,s}$ is called a pinned gradient measure.

If $q$-height-periodic boundary law  and the underlying potential are translation invariant then it is possible to obtain
probability measure $\nu$ on the gradient space by mixing the pinned gradient measures:

\begin{thm}\cite{KS}\label{KF}	
	Let a $q$-height-periodic boundary law $l$  and  its gradient interaction potential be translation invariant.
	Let $\Lambda \subset V$ be any finite connected set and let $w\in \Lambda$ be any vertex. Then the measure $\nu $ with marginals given by
	\begin{equation}
		\nu (\eta_{\L\cup\partial \L} = \zeta_{\L\cup\partial\L}) = Z_\L \ \left(\sum_{s\in\Z_q}  \prod_{y \in \partial\L} l \big(s + \sum_{b \in \Gamma(w,y)} \zeta_{b}\big)  \right)\prod_{b \cap \L \neq \emptyset} Q(\zeta_b),
	\end{equation}
	where $Z_\L$ is a normalisation constant, defines a translation invariant GGM on $\Omega^\nabla$.
\end{thm}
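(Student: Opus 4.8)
The plan is to realise $\nu$ as a \emph{fixed} convex combination of the pinned gradient measures $\nu_{w,s}$, $s\in\Z_q$, furnished by the preceding (pinned-measure) theorem, and then to treat translation invariance as a separate, essentially formal, step. Writing
\[
W_{\Lambda,s}(\zeta)=\prod_{y\in\partial\Lambda} l\Big(s+\sum_{b\in\Gamma(w,y)}\zeta_b\Big)\prod_{b\cap\Lambda\neq\emptyset}Q(\zeta_b),\qquad h_y(\zeta)=\sum_{b\in\Gamma(w,y)}\zeta_b,
\]
the marginal in the statement is $\nu(\eta_{\L\cup\partial\L}=\zeta)=Z_\L\sum_{s\in\Z_q}W_{\Lambda,s}(\zeta)$, whereas the pinned measure of the previous theorem is $\nu_{w,s}=N_{\Lambda,s}^{-1}W_{\Lambda,s}$ with $N_{\Lambda,s}=\sum_\zeta W_{\Lambda,s}(\zeta)$. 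Hence the marginals of $\nu$ read $\sum_{s}\lambda_s^{(\Lambda)}\nu_{w,s}$ with $\lambda_s^{(\Lambda)}=N_{\Lambda,s}/\sum_{s'}N_{\Lambda,s'}$, and the whole task reduces to showing that these weights are $\Lambda$-independent and that the resulting measure is translation invariant.

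First I would establish consistency of the family $\{\nu(\eta_{\L\cup\partial\L}=\cdot)\}_\L$ directly from the boundary-law equation \eqref{eq:bl}. Enlarging $\Lambda$ to $\Lambda'$ by adjoining the successors $S(y)$ of each former boundary vertex $y\in\partial\Lambda$ and summing out the new gradient variables $\zeta_{\langle y,y'\rangle}$, I set $i=s+h_y(\zeta)$ and $j=i+\zeta_{\langle y,y'\rangle}$ and invoke the translation-invariant form $l(i)=c\prod_{y'\in S(y)}\sum_{j}Q(i,j)\,l(j)$ of \eqref{eq:bl}. This collapses $\prod_{y'\in S(y)}Q(\zeta_{\langle y,y'\rangle})\,l(s+h_{y'})$ back to $c^{-1}l(s+h_y)$, so that summing out the new edges sends $W_{\Lambda',s}$ to $C\,W_{\Lambda,s}$, where $C$ is a product of the boundary-law constants $c_{xy}$ over the adjoined vertices. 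The decisive point is that $C$ is \emph{independent of $s$}, since $c_{xy}$ depends only on $l$ and $Q$. Summing over $s$ then yields Kolmogorov consistency, and comparing total masses gives $N_{\Lambda',s}=C\,N_{\Lambda,s}$, whence the ratios $\lambda_s^{(\Lambda)}$ are indeed $\Lambda$-independent. Thus $\nu=\sum_{s\in\Z_q}\lambda_s\,\nu_{w,s}$ is a fixed convex combination of the GGMs $\nu_{w,s}$; because the DLR equation is linear in the measure, $\nu$ is again a GGM.

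It then remains to prove translation invariance, where the $q$-periodicity enters twice. First, the marginal formula is independent of the base point $w$: replacing $w$ by another $w'\in\Lambda$ shifts every height by the common integer $m=\sum_{b\in\Gamma(w,w')}\zeta_b$, so
\[
\sum_{s}\prod_{y}l(s+h_y)=\sum_{s}\prod_y l(s+m+h'_y)=\sum_{s'}\prod_y l(s'+h'_y),
\]
after the reindexing $s'=s+m \bmod q$, a bijection of $\Z_q$ legitimised by $l(\cdot+q)=l(\cdot)$; here $h'_y$ is the height relative to $w'$. Second, for a tree automorphism $g$ the translation invariance of $l$ and $Q$, together with the covariance $h_{g(y)}^{(gw)}(g\zeta)=h_y^{(w)}(\zeta)$ of heights along the transported path, gives $g_*\nu^{(w)}=\nu^{(gw)}$, where $\nu^{(w)}$ denotes the measure built from base point $w$. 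Combining the two facts, $g_*\nu^{(w)}=\nu^{(gw)}=\nu^{(w)}$, which is the asserted invariance.

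I expect the consistency step to be the main obstacle: everything hinges on verifying that the volume-enlargement constant $C$ produced by \eqref{eq:bl} carries no dependence on the pinning label $s$, and that the additivity of heights along nested paths $\Gamma(w,y)$ is compatible with reducing the arguments of $l$ modulo $q$. Once this bookkeeping is secured, the mixture weights freeze to $\Lambda$-independent values and the translation-invariance argument is purely formal.
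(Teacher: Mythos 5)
Your two peripheral components are correct and in fact reproduce the standard K\"ulske--Schriever bookkeeping (note the paper itself offers no proof of this theorem; it imports it from \cite{KS}): Kolmogorov consistency of each pinned family via the boundary-law equation \eqref{eq:bl} with a volume-enlargement constant that is independent of the class label $s$, and translation invariance via base-point covariance together with the reindexing $s\mapsto s+m \bmod q$ legitimised by $q$-periodicity. The genuine gap is at the decisive step, where you write that $\nu=\sum_{s}\lambda_s\,\nu_{w,s}$ is a GGM ``because the DLR equation is linear in the measure'' and the $\nu_{w,s}$ are GGMs. The pinned-measure theorem asserts only that each $\nu_{w,s}$ is a \emph{consistent family} of probability measures on $\Omega^\nabla$, i.e.\ that it defines a measure at all; it does not assert, and it is in general false, that $\nu_{w,s}$ satisfies the gradient DLR equation. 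To see why, condition $\nu_{w,s}$ on $\mathcal{T}_{\Lambda}^{\nabla}$ for $\Lambda\ni w$: the conditioning fixes the exterior gradients and the \emph{relative} height differences on $\partial\Lambda$, but the height $t$ of $\partial\Lambda$ relative to the pinned vertex $w$ remains a function of the resampled interior gradients. Computing the conditional from the marginals in a large volume $\Lambda'$, the interior configuration acquires an extra weight of the form $\prod_{y\in\partial\Lambda'} l\bigl(s+t+\Delta_y\bigr)$, with $\Delta_y$ fixed by the conditioning, which genuinely depends on $t \bmod q$ for generic $l$; by contrast the kernel $\gamma'_\Lambda$ weighs interior configurations by $\prod_b Q(\zeta_b)$ alone (restricted to the fixed boundary height differences). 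Hence $\nu_{w,s}$ fails the DLR equation, and linearity cannot transfer a property the summands do not have.

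What actually makes the theorem true is a third use of $q$-periodicity that your proposal never performs: for the \emph{mixture}, the conditional weight is $\sum_{s\in\mathbb{Z}_q}\prod_{y} l\bigl(s+t+\Delta_y\bigr)$, and the substitution $s'=s+t \bmod q$ shows this is independent of the interior shift $t$, so the conditional collapses exactly to the kernel $\gamma'_\Lambda$. In other words, the sum over $s$ is not merely a device for translation invariance (as your plan treats it) but is precisely what produces the DLR equation, by absorbing the random height of the interior relative to the boundary into the class label. Your consistency computation ($s$-independence of the constant $C$, $\Lambda$-independence of the weights $\lambda_s$) establishes only that $\nu$ is a well-defined translation-invariant measure on $\Omega^\nabla$; the Gibbs property itself must be verified directly against $\gamma'_\Lambda$ as above, and this verification is the missing core of the proof.
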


The aim of this paper is to study GGMs of gradient potential constructed by an SOS model with alternating magnetism. By Theorem \ref{KF} each $q$-height-periodic boundary law $l$, which is independent on edges of tree, defines a translation invariant GGM. We find  $q$-height-periodic boundary laws for this SOS model and use Theorem \ref{KF} to construct GGMs corresponding to these laws. 

\subsection{The boundary law equation for the SOS model.}

In this subsection for  $\sigma:x\in V\mapsto \sigma(x)\in \mathbb Z$, consider Hamiltonian
of SOS model, with alternating magnetism, i.e.,
\begin{equation}\label{f1}
	H(\sigma)=-J\sum_{\langle x, y\rangle, x, y\in V; }\alpha\left(\mid \sigma(x)-\sigma(y)\mid\right)\mid \sigma(x)-\sigma(y)\mid,
\end{equation}
where  $J\in \mathbb R$ and
$$\alpha(\mid m\mid)=%
\begin{cases} 1, \ \ \mbox{if} \ \ m\in 2\mathbb{Z} \\
	-1, \ \ \mbox{if} \ \ m\in 2\mathbb{Z}+1.
\end{cases}
$$
We consider the set $\mathbb{Z}$ as the set of vertices of a graph $G$.
We use the graph $G$  to define a $G$-admissible configuration as follows.
A configuration $\sigma$ is called a
$G$-\textit{admissible configuration} on the Cayley tree, if $\{\sigma (x), \, \sigma (y)\}$ is the edge of the graph $G$
for any pair of nearest neighbors $x,y$ in $V$. We
let $\Omega^G$ denote the set of $G$-admissible configurations.

Let $L(G)$ be the set of edges of a graph $G$. We let $A\equiv A^G=\big(a_{ij}\big)_{i,j\in \mathbb Z}$ denote the adjacency
matrix of the graph $G$, i.e.,

$$a_{ij}=a_{ij}^G=%
\begin{cases} 1 \ \ \mbox{if} \ \ \{i,j\}\in L(G), \\
	0 \ \ \mbox{if} \ \ \{i,j\}\notin L(G).
\end{cases}
$$

For any $y\in V$ and $x\in S(y)$ (so that
$d(x^0, y)=d(x^0, x)-1$), for convenience of notation we set
\begin{equation}\label{zi}
z_{i,x}:=l_{xy}(i),\qquad i\in\mathbb Z.
\end{equation}

Assume $z_{0,x}=l_{xy}(0)\equiv 1$, then the boundary law equation  (\ref{eq:bl}) corresponding to our model on the $G$-admissible configurations, by the new notation (\ref{zi}) becomes (cf. \cite{BR})
\begin{equation}\label{f7}
\begin{array}{ll}
z_{2i,x}=\prod\limits_{y \in S(x)} \frac{a_{2i,0}\theta^{2\mid i\mid}+\sum\limits_{j\in \mathbb{Z}_0} {a_{2i,2j}\theta^{2\mid i-j\mid}z_{2j,y}+\sum\limits_{j\in \mathbb{Z}} {a_{2i,2j+1}\theta^{-\mid 2(i-j)-1\mid}z_{2j+1,y}}}}{a_{0,0}+\sum\limits_{j\in \mathbb{Z}_0} {a_{0,2j}\theta^{2\mid j\mid}z_{2j,y}}+\sum\limits_{j\in \mathbb{Z}} {a_{0,2j+1}\theta^{-\mid 2j+1\mid}z_{2j+1,y}}}, \\[6mm]
z_{2i+1,x}=\prod\limits_{y \in S(x)} \frac{a_{2i+1,0}\theta^{-\mid 2i+1\mid}+\sum\limits_{j\in \mathbb{Z}_0} {a_{2i+1,2j}\theta^{-\mid 2(i-j)+1\mid}z_{2j,y}+\sum\limits_{j\in \mathbb{Z}} {a_{2i+1,2j+1}\theta^{2\mid i-j\mid}z_{2j+1,y}}}}{a_{0,0}+\sum\limits_{j\in \mathbb{Z}_0} {a_{0,2j}\theta^{2\mid j\mid}z_{2j,y}}+\sum\limits_{j\in \mathbb{Z}} {a_{0,2j+1}\theta^{-\mid 2j+1\mid}z_{2j+1,y}}}, 
\end{array}
\end{equation}
where $i\in \mathbb{Z}$, $\theta=\exp(-J\beta)$.

In general, it is very difficult to solve the system (\ref{f7}). The difficulty depends on the structure of graph $G$.
In this paper we choose the graph $G$, defined by (see Fig.\ref{fi})
$$a_{ij}=%
\begin{cases} 1, \ \ \mbox{if} \ \ \ i=j \ \ \mbox{or}\ \ \mid i-j\mid=1, \ \ i,j\in\mathbb{Z}, \\
0, \ \ \mbox{otherwise}.
\end{cases}
$$

\begin{figure}[h]
\begin{center}
   \includegraphics[width=13cm]{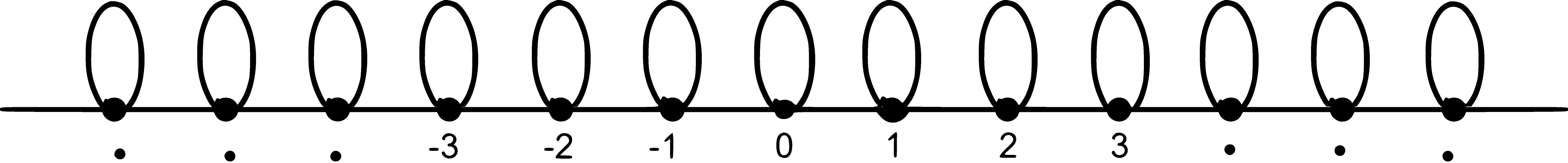}
\end{center}
     \caption{The graph $G$ with the set $\mathbb Z$ of vertices.} \label{fi}
\end{figure}

For this concrete graph from (\ref{f7}) we get
\begin{equation}\label{f8}
z_{i,x}=\prod_{y \in S(x)} \frac{\theta z_{i,y}+z_{i-1,y}+z_{i+1,y}}{\theta+z_{-1,y}+z_{1,y}}, \ \
i\in \mathbb{Z}_0 
\end{equation}

\section{Constant solutions: Translation invariant measures}
In this section we find some constant solutions of (\ref{f8}), i.e., $z_x=z\in \mathbb R_{+}^{\infty}$ does not depend on vertices of the Cayley tree. In this case from (\ref{f8}) we get 
\begin{equation}\label{f9}
	z_{i}=\left(\frac{\theta z_{i}+z_{i-1}+z_{i+1}}{\theta+z_{-1}+z_{1}}\right)^k, \ \
	i\in \mathbb{Z}_0 
\end{equation}
here $\theta>0, \ z_i>0$, $z_0=1$.

\subsection{2-periodic solution of (\ref{f9})}
In (\ref{f9}) we assume 
$$
z_j= \begin{cases}
1, \ \ \mbox{if}  \ \ j \ \ \mbox{is even}, \\
y, \ \ \mbox{if} \ \ j \ \ \mbox{is odd}.
\end{cases}
$$
Then from equation (\ref{f9}) we get

\begin{equation}\label{f11}
y=\left(\frac{\theta y+2}{\theta+2y}\right)^{k}.
\end{equation}
Denoting $a=\sqrt[k]{y}$ from (\ref{f11}) we obtain 
\begin{equation}\label{ab3}
	2a^{k+1}-\theta a^k+\theta a-2=0
\end{equation}

The equation (\ref{ab3}) has the solution $a=1$ independently of the parameters $(\theta, k)$. Dividing both sides of (\ref{ab3}) by $a-1$
we get
\begin{equation}\label{uy22}
	2a^k+(2-\theta)(a^{k-1}+a^{k-2}+\dots+a)+2=0.
\end{equation}
The following lemma gives the number of solutions to equation (\ref{uy22}):
\begin{lemma}\label{l6}\cite[Lemma 1]{HR} For each $k\geq 2$,
	there is exactly one critical value of $\theta$, i.e., $\theta_{\mathrm{c}}=\theta_{\mathrm{c}}(k):={2(k+1)\over k-1}$, such that
	\begin{enumerate}
		\item if $\theta<\theta_{\mathrm{c}}$ then (\ref{uy22}) has no positive solution;
		\item if $\theta=\theta_{\mathrm{c}}$ then the equation has a unique solution $a=1$;
		\item if $\theta>\theta_{\mathrm{c}}$,	then it has exactly two solutions (both different from 1) denoted as $a_1$, $a_2$;
	\end{enumerate}
\end{lemma}

Thus the corresponding solutions of (\ref{f11})  are 
\begin{equation}\label{bir}
\begin{array}{lll}
	1) \ \ 1 \ \mbox{for} \ \ \theta\leq \theta_{c};\\
	2) \ \ 1, \ a_1^k, \ a_2^k, \ \ \mbox{for} \ \ \theta>\theta_{c}.
\end{array}
\end{equation}
				
By Theorem \ref{KF} each $2$-height-periodic boundary law  defines a translation invariant GGM. Therefore, solutions (\ref{bir}) allow us to formulate the following  result.

\begin{thm} Let $k\geq2$,  $\theta_{c}=\frac {2(k+1)}{k-1}$. Then for the Hamiltonian (\ref{f1}) on the $G$-admissible configurations space with graph $G$ given in Fig. \ref{fi}, the number $\nu_2(k,\theta)$ of 2-height-periodic GGMs is given by the following formula
	$$\nu_2(k,\theta)=\left\{\begin{array}{lll}
		1, \ \ \mbox{if} \ \ \theta\leq \theta_c\\[2mm]
		3, \ \ \mbox{if} \ \ \theta>\theta_c.
		\end{array}\right.
 $$
\end{thm}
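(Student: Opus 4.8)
The plan is to invoke Theorem \ref{KF}: since the gradient potential of the Hamiltonian (\ref{f1}) is translation invariant, every translation invariant $2$-height-periodic boundary law yields exactly one translation invariant GGM, and conversely. Counting $2$-height-periodic GGMs therefore reduces to counting the distinct $2$-periodic boundary laws, that is, the positive solutions $(x,y)$ of the fixed-point system (\ref{f10}) — one value $x$ for even heights, one value $y$ for odd heights — modulo the identifications that leave the associated measure unchanged.

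First I would assemble the complete solution set of (\ref{f10}) from the three cases already treated. The case $x=1$ produces the points in (\ref{bir}); by the Proposition their number equals $\nu_0(\theta,a)$, namely the always-present root $y=1$ alone for $\theta\le\theta_c$ and three roots $1,y_0^{(2)},y_1^{(2)}$ for $\theta>\theta_c$ (that $a(\theta)$ falls in the interval $(a_-,a_+)$ of (\ref{f17}) throughout $\theta>\theta_c$ follows from the monotonicity and boundedness of $y\mapsto\left(\frac{\theta y+2}{\theta+2y}\right)^{k}$). The diagonal case $x=y\ne1$ contributes the single point (\ref{ikki}), which exists and differs from $(1,1)$ precisely when $\theta\ne\theta_0$, by the Lemma analysing (\ref{f20}). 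The off-diagonal case $x\ne y,\ x\ne1$ contributes the two points (\ref{uch}), which by the Lemma analysing (\ref{f25}) exist exactly for $\theta>\theta_c$.

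Next I would tabulate these solutions on each interval. For $\theta\in(0,\theta_0)\cup(\theta_0,\theta_c)$ only $(1,1)$ and the diagonal point $(x_1^{(2)},x_1^{(2)})$ are present, and they are distinct because $x_1^{(2)}\ne1$, giving the value $2$. At $\theta=\theta_0$ the Lemma on (\ref{f20}) forces $x_1^{(2)}=1$, so these two coalesce and the count falls to $1$. At the endpoint $\theta=\theta_c$ the Remark gives $x_1^{(2)}=x_2^{(2)}=x_3^{(2)}$ and $\tau_1=\tau_2=1$, so (\ref{uch}) merges into (\ref{ikki}) while $y_0^{(2)},y_1^{(2)}$ collapse onto $y=1$; two solutions survive, consistent with the value $2$ on $(\theta_0,\theta_c]$. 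For $\theta>\theta_c$ all three families coexist, yielding the six points (\ref{bir}),(\ref{ikki}),(\ref{uch}) and the value $6$.

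The main obstacle — and the step I would examine most carefully — is to prove that these tabulated points genuinely index \emph{distinct} GGMs. A boundary law is defined only up to a positive multiplicative constant, and the mixing formula of Theorem \ref{KF} is invariant under $l\mapsto\lambda l$; consequently the GGM attached to a $2$-periodic law depends on $(x,y)$ only through the ratio $y/x$, so two proportional solutions of (\ref{f10}) define the same measure. Deciding distinctness therefore amounts to comparing these ratios, and this is delicate. Indeed, writing $\tau=\frac{\theta x+2y}{2x+\theta y}$ as in the derivation of (\ref{f25}) (so that $x=\tau^{k}y$), equation (\ref{f24}) rearranges to $\tau=\frac{\theta\tau^{k}+2}{\theta+2\tau^{k}}$, which shows that $\tau^{k}$ solves the very equation (\ref{f11}) governing the case $x=1$; hence each off-diagonal point (\ref{uch}) carries a ratio $y/x$ equal to one of the nontrivial roots $y_0^{(2)},y_1^{(2)}$ of (\ref{f11}), i.e.\ the same ratio as one of the $x=1$ points $(1,y_0^{(2)}),(1,y_1^{(2)})$. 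Confirming that the six points of the regime $\theta>\theta_c$ nevertheless carry pairwise distinct ratios — rather than collapsing in proportional pairs to fewer measures — is where the real content of the theorem resides; the supporting root counts are, by contrast, routine Descartes-type arguments.
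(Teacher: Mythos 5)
Your tabulation is, step for step, the paper's own argument: the paper assembles the solutions (\ref{bir}), (\ref{ikki}), (\ref{uch}) of the system (\ref{f10}) from the same three cases, invokes Theorem \ref{KF}, and reads off $\nu_2=2/1/6$; its only comment on distinctness is the sentence below Fig.~\ref{fg} asserting that intersecting solution branches still give different measures ``because the corresponding first coordinates are distinct''. Your treatment of the endpoint $\theta=\theta_c$ is in fact more careful than the paper's list (\ref{bir}), which records $(1,y_0^{(2)})$ as a second solution at $\theta=\theta_c$: for $k=2$ the nontrivial factor of (\ref{f11}) is $4y^2+(4\theta-\theta^2+4)y+4$, whose two roots merge at $y=1$ exactly at $\theta=\theta_c$, so your collapse claim is the reading consistent with the count $2$ on $(\theta_0,\theta_c]$. (Your parenthetical that $a(\theta)\in(a_-,a_+)$ for all $\theta>\theta_c$ is asserted rather than proved, but the paper does no better, citing Preston.)

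The ``main obstacle'' you single out is, however, not a loose end you failed to tie up: it is a genuine gap in the paper, and your ratio computation resolves it \emph{negatively}. The marginal formula in Theorem \ref{KF} is invariant under $l\mapsto\lambda l$ (the factor $\lambda^{|\partial\Lambda|}$ is constant in $\zeta$ and $s$ and is absorbed into $Z_\Lambda$), so the GGM attached to a $2$-periodic solution of (\ref{f10}) depends only on the ratio $y/x$. Consequently $(1,1)$ and $(x_1^{(2)},x_1^{(2)})$, being proportional constant laws, induce the \emph{same} (free) gradient measure already for $\theta<\theta_c$; and your observation that (\ref{f24}) rearranges to $\tau=(\theta\tau^k+2)/(\theta+2\tau^k)$, whence $\tau^k$ solves (\ref{f11}), shows that each point of (\ref{uch}) is proportional to $(1,y_0^{(2)})$ or $(1,y_1^{(2)})$. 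So for $\theta>\theta_c$ the six points carry only the three ratios $1,\,y_0^{(2)},\,y_1^{(2)}$, i.e.\ three proportionality classes of boundary laws --- and only two GGMs once one also applies the paper's own Remark \ref{rk}, since the shift $(x,y)\mapsto(y,x)$ composed with scaling sends the ratio $r$ to $1/r$, and $y_1^{(2)}=1/y_0^{(2)}$ because the solution set of (\ref{f11}) is invariant under $r\mapsto 1/r$. The paper's distinctness criterion via first coordinates is invalid for precisely this reason (the paper even uses Remark \ref{rk} to identify solutions in its $3$- and $4$-periodic theorems but not here). In short: as a count of solution points of the normalized system (\ref{f10}) your table and the theorem agree, but as a count of \emph{distinct GGMs} --- which is what the statement asserts --- the values $2$ and $6$ are overcounts ($1$ for $\theta\le\theta_c$, and $3$, or $2$ modulo Remark \ref{rk}, for $\theta>\theta_c$), and no argument along the paper's route, yours included, can prove the theorem as stated.
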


\subsection{3-periodic solutions of (\ref{f9})}

In this subsection we are interested to find 3-periodic solutions of the system (\ref{f9}) which have the form
\begin{equation}\label{zz}
z_j= \begin{cases}
1, \ \ \mbox{if} \ \ j\equiv0 \mod3, \\
x, \ \ \mbox{if} \ \ j\equiv1 \mod3, \\
y, \ \ \mbox{if} \ \ j\equiv2 \mod3.
\end{cases}
\end{equation}
Then the system becomes:
\begin{equation}\label{f28}
\left\{
\begin{array}{ll}
x=\left(\frac{1+y+\theta x}{\theta+y+x}\right)^{k}, \\[4mm]
y=\left(\frac{1+x+\theta y}{\theta+y+x}\right)^{k}.
\end{array}
\right.\end{equation}
Surprisingly this system of equations coincides with the system (3.2) in \cite{KRK} which describes translation-invariant Gibbs measures of the $q$-state Potts model. Our system (\ref{f28}) is a particular case which corresponds to $q=3$. 

By Theorem 1 of \cite{KRK} for $q=3$ we have 

\begin{lemma}\label{an}  Let $k\geq2$, $\theta_{\rm cr}={k+2\over k-1}$. Then for the system (\ref{f28}) there is  $\theta=\theta_{c,1}$ such that 
\begin{itemize}	
\item[1.] If	$\theta<\theta_{c,1}$ then there is unique solution: $(1,1)$;
\item[2.] If $\theta=\theta_{c,1}$ or $\theta=\theta_{\rm cr}$ then there are 4 solutions: 
$$(1,1), \, (x_{1}^{(3)},x_{1}^{(3)}), \, (1,x_{3}^{(3)}), \, (x_{3}^{(3)},1);$$
\item[3.] If $\theta>\theta_{c,1}$, $\theta\ne \theta_{\rm cr}$
then there are 7 solutions 
$$(1,1), \, (x_{1}^{(3)},x_{1}^{(3)}), \, (x_{2}^{(3)},x_{2}^{(3)}), \, (1,x_{3}^{(3)}), \, (x_{3}^{(3)},1), \, (1,x_{4}^{(3)}), \, (x_{4}^{(3)},1).$$
\end{itemize}
\end{lemma}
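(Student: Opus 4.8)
The plan is to exploit the identification, noted immediately above the statement, of the system (\ref{f28}) with the translation-invariant boundary-law system (3.2) of \cite{KRK} for the $q$-state Potts model taken at $q=3$, and then to transcribe the classification from Theorem~1 of \cite{KRK}. The first step is to make the identification watertight: writing the $3$-periodic boundary-law vector as $(z_0,z_1,z_2)=(1,x,y)$ and reading $\theta=e^{J\beta}$ as the Potts coupling, the two right-hand sides of (\ref{f28}) are term-by-term the two independent Potts recursions for three colours, with the Potts threshold $\theta_{\rm cr}=1+\frac{3}{k-1}=(k+2)/(k-1)$. Once the two systems are seen to coincide symbol for symbol, the counts $1$, $4$ and $7$ are exactly the statement of Theorem~1 of \cite{KRK} specialised to $q=3$, and the lemma follows.

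To make the enumeration transparent and to provide an independent check, I would sort the solutions of (\ref{f28}) by symmetry type. The system is invariant under $x\leftrightarrow y$, so non-diagonal solutions occur in pairs. A convenient observation is that $x=1$ satisfies the first equation of (\ref{f28}) identically, because then its numerator $1+y+\theta$ and its denominator $\theta+y+1$ coincide; symmetrically $y=1$ solves the second equation. Hence every solution lies on the diagonal $\{x=y\}$, or on one of the boundary lines $\{x=1\},\{y=1\}$, or is genuinely interior with $x\neq y$, $x\neq 1$, $y\neq 1$. On the diagonal, putting $x=y$ collapses (\ref{f28}) to
\[
x=\left(\frac{1+(1+\theta)x}{\theta+2x}\right)^{k},
\]
whose positive roots are $x=1$ together with $x_1^{(3)},x_2^{(3)}$; on the line $\{x=1\}$ the second equation becomes
\[
y=\left(\frac{2+\theta y}{1+\theta+y}\right)^{k},
\]
with positive roots $y=1,x_3^{(3)},x_4^{(3)}$, and the mirror line $\{y=1\}$ contributes $(x_3^{(3)},1),(x_4^{(3)},1)$. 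After the substitutions $x=s^k$ and $y=r^k$ these one-variable equations become $f(s)=2s^{k+1}-(1+\theta)s^k+\theta s-1=0$ and $g(r)=r^{k+1}-\theta r^k+(1+\theta)r-2=0$; since $g(r)=-r^{k+1}f(1/r)$, the boundary roots are the reciprocals of the diagonal roots, so the two branches share their bifurcation points. Both are analysed by Descartes' rule after dividing out the common root $s=1$, exactly as in the proofs of the earlier lemmas.

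The two thresholds then read off as follows. Differentiating gives $f'(1)=k+2-\theta(k-1)$, so $s=1$ is a double root precisely when $\theta=\theta_{\rm cr}=(k+2)/(k-1)$; by the reciprocity above the same holds for $g$. Thus at $\theta=\theta_{\rm cr}$ the branches $x_2^{(3)}$ and $x_4^{(3)}$ cross the trivial value $1$, whereas at the saddle-node value $\theta_{c,1}$ the two diagonal roots (and, reciprocally, the two boundary roots) are born. Counting branches and discarding the double-counted state $(1,1)$ yields $1+2+4=7$ solutions for $\theta>\theta_{c,1}$, $\theta\ne\theta_{\rm cr}$; a drop to $4$ at each of the two critical values, where a pair of roots merges (into $1$ at $\theta_{\rm cr}$, into each other at $\theta_{c,1}$); and the single solution $(1,1)$ for $\theta<\theta_{c,1}$.

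The genuinely delicate point — the only real obstacle — is to show that the interior locus $x\ne y$, $x\ne 1$, $y\ne 1$ is empty, i.e.\ that every solution of (\ref{f28}) takes at most two distinct values among $1,x,y$; this ``no fully-asymmetric solution'' fact, together with the precise comparison $\theta_{c,1}<\theta_{\rm cr}$ of the two thresholds, is the structural content of the Potts classification and is not immediate from the subtraction of the two equations alone. Since (\ref{f28}) is literally the $q=3$ Potts system, all of this is already contained in Theorem~1 of \cite{KRK}, which is the route I would take, using the diagonal/boundary computation above only as a consistency check on the solution count and on the value $\theta_{\rm cr}=(k+2)/(k-1)$.
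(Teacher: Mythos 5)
Your proposal is correct and follows essentially the same route as the paper, whose entire proof consists of observing that (\ref{f28}) is the system (3.2) of \cite{KRK} at $q=3$ and invoking Theorem~1 there. Your supplementary diagonal/boundary-line computation (including the reciprocity $g(r)=-r^{k+1}f(1/r)$ and $f'(1)=k+2-\theta(k-1)$ locating $\theta_{\rm cr}$) is a sound consistency check, and you correctly identify that the absence of fully asymmetric solutions and the comparison $\theta_{c,1}<\theta_{\rm cr}$ are exactly what must be imported from \cite{KRK} rather than rederived.
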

\begin{rk}\label{rk}
	Note that (see \cite{HKLR} and \cite{HKb})	if a height-periodic boundary law is obtained from another one by a cyclic
	shift, then it leads to the same GGM.  
\end{rk}
Now by Lemma \ref{an}, Remark \ref{rk} and Theorem \ref{KF} we get the following result. 
\begin{thm} For the model (\ref{f1}) with graph $G$ given in Fig. \ref{fi}, the number $\nu_3(k,\theta)$ of 3-height-periodic (of the form (\ref{zz})) GGMs is given by the following formula
	$$\nu_3(k,\theta)=\left\{\begin{array}{lll}
		1, \ \ \mbox{if} \ \ \theta<\theta_{c,1}\\[2mm]
		3, \ \ \mbox{if} \ \ \theta=\theta_{c,1}, \theta_{\rm cr}\\[2mm]
		5, \ \ \mbox{if} \ \ \theta\in (\theta_{c,1}, +\infty)\setminus \{\theta_{\rm cr}\}.
	\end{array}\right.\quad
	$$
\end{thm}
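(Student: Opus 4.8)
The plan is to turn the enumeration of $3$-height-periodic GGMs into an enumeration of solutions of the fixed-point system (\ref{f28}), modulo the equivalence ``defines the same measure.'' By Theorem \ref{KF} every $3$-height-periodic boundary law of the form (\ref{zz}) produces a translation-invariant GGM, so each solution $(x,y)$ of (\ref{f28}) yields at least one such measure; and by Lemma \ref{an} the full solution set is known in each regime, namely one solution for $\theta<\theta_{c,1}$, four solutions at $\theta\in\{\theta_{c,1},\theta_{\rm cr}\}$, and seven solutions for $\theta>\theta_{c,1}$ with $\theta\neq\theta_{\rm cr}$. Hence the only remaining task is to determine how many \emph{distinct} GGMs these solutions produce.

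The collapsing of the count is governed by Remark \ref{rk}: boundary laws related by a cyclic height-shift define the same GGM. I would make this explicit on the form (\ref{zz}): writing a solution as $(z_0,z_1,z_2)=(1,x,y)$, a unit shift sends it to $(z_1,z_2,z_0)=(x,y,1)$, and after dividing by the new height-$0$ value one again obtains a law of the form (\ref{zz}). Running this shift on the solutions of Lemma \ref{an}, the symmetric solution $(1,1)$ is fixed. For the lateral solutions it is immediate that, in the triple notation, $(1,x_j^{(3)})=(1,1,x_j^{(3)})$ is carried by one shift to $(1,x_j^{(3)},1)=(x_j^{(3)},1)$; thus each lateral pair $(1,x_j^{(3)}),(x_j^{(3)},1)$ present in the given regime collapses to a single measure. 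The diagonal solutions $(x_1^{(3)},x_1^{(3)})$ and $(x_2^{(3)},x_2^{(3)})$ are the ones that must be tracked separately.

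Granting this orbit description, the count is immediate: for $\theta<\theta_{c,1}$ the unique solution $(1,1)$ gives $\nu_3=1$; at $\theta=\theta_{c,1}$ and $\theta=\theta_{\rm cr}$ the four solutions split into the orbits $\{(1,1)\}$, $\{(x_1^{(3)},x_1^{(3)})\}$ and $\{(1,x_3^{(3)}),(x_3^{(3)},1)\}$, giving $\nu_3=3$; and for $\theta>\theta_{c,1}$, $\theta\neq\theta_{\rm cr}$, the seven solutions split into the five orbits $\{(1,1)\}$, $\{(x_1^{(3)},x_1^{(3)})\}$, $\{(x_2^{(3)},x_2^{(3)})\}$, $\{(1,x_3^{(3)}),(x_3^{(3)},1)\}$ and $\{(1,x_4^{(3)}),(x_4^{(3)},1)\}$, giving $\nu_3=5$, as claimed.

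The delicate part of the argument — and the one I expect to require real work — is verifying that the orbit description above is exact, together with the distinctness of the surviving measures. Concretely, one must check (i) that no further cyclic shift identifies a diagonal solution $(x_i^{(3)},x_i^{(3)})$ with one of the lateral pairs, which amounts to a computation with the explicit relations defining $x_1^{(3)},\dots,x_4^{(3)}$ in Lemma \ref{an}; and (ii) that boundary laws in different orbits produce genuinely different GGMs. For (ii) I would compare the measures through a shift-invariant functional built from the mixing formula of Theorem \ref{KF} — for example a single-edge gradient marginal — and exhibit distinct values on distinct orbits, in the same spirit as the remark accompanying Fig. \ref{fg} that intersecting branches need not give the same measure because their first coordinates differ. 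Finally, the boundary values $\theta=\theta_{c,1}$ and $\theta=\theta_{\rm cr}$, where some solution branches merge, must be handled separately to confirm the stated value $\nu_3=3$ there.
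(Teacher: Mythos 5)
Your proposal is correct and follows essentially the same route as the paper: the paper's proof is precisely the combination of Lemma \ref{an} (the solution counts $1$, $4$, $7$ imported from the Potts-model result of \cite{KRK}), Remark \ref{rk} (collapsing each lateral pair $(1,x_j^{(3)})$, $(x_j^{(3)},1)$ via the cyclic shift, exactly as in your orbit computation), and Theorem \ref{KF}. Your closing points (i) and (ii) on the exactness of the orbit description and the distinctness of the resulting measures are legitimate refinements that the paper leaves implicit rather than a divergence in method.
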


\subsection{4-periodic solutions of (\ref{f9})}

In the system (\ref{f9}) we are going to find solutions of the form
\begin{equation}\label{zzz}
z_j= \begin{cases}
1, \ \ \mbox{if} \ \ j\equiv0 \mod4, \\
x, \ \ \mbox{if} \ \ j\equiv1 \mod4, \\
z, \ \ \mbox{if} \ \ j\equiv2 \mod4,\\
y, \ \ \mbox{if} \ \ j\equiv3 \mod4.
\end{cases}\quad
\end{equation}
i.e. we are considering the following sequence 
\begin{equation}\label{xyz}
	 \dots, 1, x, z, y, 1, x, z, y, 1, x, z, y, 1, \dots\quad
\end{equation}	 
Then $x, z, y$ should satisfy
\begin{equation}\label{fa}
	\left\{
	\begin{array}{lll}
		x=\left(\frac{\theta x+1+z}{\theta+x+y}\right)^{k}, \\[3mm]
		z=\left(\frac{x+y+\theta z}{\theta+x+y}\right)^{k}, \\[3mm]
		y=\left(\frac{1+\theta y+z}{\theta+x+y}\right)^{k}.
	\end{array}
	\right.\end{equation}

{\bf CASE $z=1$}. Note that $z=1$ is a solution to the second equation of (\ref{fa}), independently on values of parameters $k\geq 1$, $\theta>0$ and unknowns $x,y$. In this case the system becomes 
\begin{equation}\label{f38}
\left\{
\begin{array}{ll}
x=\left(\frac{\theta x+2}{\theta+x+y}\right)^{k}, \\[3mm]
y=\left(\frac{\theta y+2}{\theta+x+y}\right)^{k}.
\end{array}
\right.\end{equation}

Subtracting from the first equation of system (\ref{f38}) the second one, we get
\begin{equation}\label{f39}
(x-y)\left[1-\frac{\theta\left((\theta x+2)^{k-1}+...+(\theta y+2)^{k-1}\right)}{(\theta+x+y)^{k}}\right]=0.
\end{equation}

Consequently, $x=y$ or
\begin{equation}\label{f40}
(\theta+x+y)^{k}=\theta\left((\theta x+2)^{k-1}+...+(\theta y+2)^{k-1}\right).
\end{equation}

{\bf SubCASE:} $x=y.$ In this case our 4-periodic solutions coincide with 2-periodic ones. 
%
%
%

{\bf SubCASE:} $x\ne y$. In this case for simplicity we take $k=2$. 
Denote
\begin{equation}\label{the}\theta_{c}^{(3)}=\frac{2}{3}\sqrt[3]{54+6\sqrt{33}}+\frac{8}{\sqrt[3]{54+6\sqrt{33}}}+2\approx6.766.
	\end{equation}
\begin{lemma}\label{4b}  Let $k=2.$ Then for solutions of the system (\ref{f38}), with $x\ne y$, the following hold 
\begin{itemize}
  \item if $\theta \leq 2$  then there is no solution (satisfying $x\ne y$);
  \item if $2<\theta\leq \theta_{c}^{(3)}$  then there are exactly two such solutions:  $$\left(x_{3}^{(4)},y_{3}^{(4)}\right), \, \left(y_{3}^{(4)},x_{3}^{(4)}\right);$$
   \item if $\theta>\theta_{c}^{(3)}$ then there are exactly four solutions (satisfying $x\ne y$):  $$\left(x_{3}^{(4)},y_{3}^{(4)}\right), \, \left(y_{3}^{(4)},x_{3}^{(4)}\right), \, \left(x_{4}^{(4)},y_{4}^{(4)}\right), \, \left(y_{4}^{(4)},x_{4}^{(4)}\right).$$
\end{itemize}
\end{lemma}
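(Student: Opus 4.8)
The plan is to set $k=2$ throughout and reduce the system (\ref{f38}) to a single quadratic in the variable $s:=\sqrt{x}+\sqrt{y}$, then count its admissible roots. Since $x,y>0$ I would first take positive square roots in (\ref{f38}), obtaining $\sqrt{x}=(\theta x+2)/(\theta+x+y)$ and $\sqrt{y}=(\theta y+2)/(\theta+x+y)$. Writing $p=\sqrt{x}$, $q=\sqrt{y}$ and $u=p^2+q^2=x+y$, subtracting these two identities and cancelling the factor $p-q$ (legitimate because $x\neq y$) yields $\theta+u=\theta(p+q)$, while adding them and using this relation gives $\theta(p+q)^2=\theta u+4$. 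Together these force $pq=2/\theta$ and
\begin{equation}\label{Qeq}
s^2-\theta s+\Big(\theta-\tfrac{4}{\theta}\Big)=0,\qquad s=p+q.
\end{equation}
Thus every solution of (\ref{f38}) with $x\neq y$ corresponds to a root $s$ of (\ref{Qeq}) for which $p,q$ are the roots of $t^2-st+2/\theta=0$; these are real, positive and distinct exactly when $s>s_\ast:=\sqrt{8/\theta}$. Each admissible $s$ produces the unordered pair $\{p^2,q^2\}=\{x,y\}$, hence two ordered solutions related by $(x,y)\leftrightarrow(y,x)$, so the whole problem becomes: count the roots of (\ref{Qeq}) exceeding $s_\ast$ and double the answer.

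Next I would locate the roots of (\ref{Qeq}). Its discriminant is $D=\theta^2-4\theta+16/\theta$, and multiplying by $\theta>0$ shows $D$ has the sign of $\theta^3-4\theta^2+16$, whose minimum on $(0,\infty)$ (at $\theta=8/3$) equals $176/27>0$; hence $D>0$ and (\ref{Qeq}) always has two distinct real roots $s_-<s_+$ with $s_-+s_+=\theta$ and $s_-s_+=\theta-4/\theta$. The sign of the product then splits the analysis: for $\theta<2$ one root is negative and one positive, at $\theta=2$ the roots are $0$ and $2$, and for $\theta>2$ both roots are positive.

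The decisive quantity is $Q(s_\ast)$, where $Q(s)$ denotes the left side of (\ref{Qeq}); a direct computation gives $Q(s_\ast)=\phi(\theta):=\theta+4/\theta-2\sqrt{2\theta}$. Clearing the root and squaring turns $\phi(\theta)=0$ into $(\theta^2+4)^2=8\theta^3$, i.e. $\theta^4-8\theta^3+8\theta^2+16=0$, which factors as $(\theta-2)(\theta^3-6\theta^2-4\theta-8)=0$; the substitution $\theta=w+2$ reduces the cubic factor to $w^3-16w-32=0$, and Cardano's formula produces exactly the stated value $\theta_c^{(3)}$. Testing one interior point of each interval (e.g.\ $\theta=4$, where $\phi(4)<0$) together with $\phi\to+\infty$ as $\theta\to0^+$ and as $\theta\to\infty$, I would conclude $\phi>0$ on $(0,2)$, $\phi<0$ on $(2,\theta_c^{(3)})$, and $\phi>0$ on $(\theta_c^{(3)},\infty)$.

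Finally I would assemble the count by combining the sign of $Q(s_\ast)=\phi(\theta)$ with the position of the vertex $s=\theta/2$ of $Q$ relative to $s_\ast$ (note $s_\ast<\theta/2$ precisely when $\theta^3>32$). For $\theta\le2$, $\phi\ge0$ and the only positive root is $\le s_\ast$, so there is no admissible $s$; for $2<\theta<\theta_c^{(3)}$, $\phi<0$ places $s_\ast$ strictly between the two positive roots, so precisely $s_+$ is admissible and yields the two transposed solutions, and at $\theta=\theta_c^{(3)}$ the smaller root coincides with $s_\ast$ (the excluded case $x=y$), again leaving only $s_+$; and for $\theta>\theta_c^{(3)}$, $\phi>0$ together with $s_\ast<\theta/2$ forces $s_\ast<s_-<s_+$, so both roots are admissible and produce four solutions. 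The main obstacle is the bookkeeping in this last step: one must use the vertex location and the sign of $\phi$ simultaneously to argue that $Q(s_\ast)>0$ means \emph{both} roots exceed $s_\ast$ when $\theta>\theta_c^{(3)}$ but \emph{neither} does when $\theta<2$, and that the degenerate root $s_-=s_\ast$ at $\theta=\theta_c^{(3)}$ must be discarded as the $x=y$ case rather than counted as a genuine new solution.
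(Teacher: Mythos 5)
Your proof is correct, and it takes a genuinely different route from the paper. The paper eliminates in two stages: from (\ref{f39}) it first solves a quadratic for the sum $x+y$, obtaining the two branches $\phi_{1,2}(\theta)$ in (\ref{f42}), and then for each branch analyzes a separate quadratic in $x$ ((\ref{f43}) and (\ref{f46})) whose discriminant condition leads to the irrational inequality (\ref{f44}) and a case-by-case comparison with $\theta_{\ast}^{(2)}$; the threshold $\theta=2$ emerges from the $\phi_1$-branch and $\theta_c^{(3)}$ from the $\phi_2$-branch, the latter dispatched only by ``a similar analysis.'' You instead substitute $p=\sqrt{x}$, $q=\sqrt{y}$ (legitimate since all quantities in (\ref{f38}) are positive), extract the invariant $pq=2/\theta$, and compress the whole problem into locating the two roots of the single quadratic $s^2-\theta s+\theta-4/\theta=0$ in $s=p+q$ relative to the threshold $s_\ast=\sqrt{8/\theta}$; I checked the converse direction you leave implicit, and it is a one-line verification ($\theta ps=\theta p^2+\theta pq=\theta p^2+2$ when $pq=2/\theta$ and $s$ is a root), so the correspondence between admissible roots and unordered solution pairs is indeed exact, with your roots $s_\pm$ matching the paper's branches $\phi_{1,2}$ via $x+y=s^2-4/\theta$. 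Your approach buys a unified explanation that the paper's analysis obscures: both critical values $\theta=2$ and $\theta_c^{(3)}$ appear as the two positive zeros of the single function $\phi(\theta)=\theta+4/\theta-2\sqrt{2\theta}$, the factorization $(\theta-2)(\theta^3-6\theta^2-4\theta-8)$ together with the shift $\theta=w+2$ explains the Cardano form of $\theta_c^{(3)}$ (which the paper simply states), and the degenerate boundary cases are transparently the events $s_\pm=s_\ast$, i.e.\ $x=y$, which you correctly discard. The paper's route, in exchange, stays closer to the standard two-variable elimination and ties the solution labels $x_3^{(4)}$, $x_4^{(4)}$ directly to the explicit formulas (\ref{f45}) and (\ref{f48}), which the subsequent theorem references.
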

\begin{proof} By (\ref{f39}), for  $k=2$ and $x\neq y$ we get  $$(\theta+x+y)^2=\theta(\theta(x+y)+4).$$
Solving this with respect to $x+y$ we get 
\begin{equation}\label{f42}
(x+y)_{1,2}=\frac{\theta^{2}-2\theta \pm\sqrt{D}}{2}=\phi_{1,2}(\theta),
\end{equation}
where   $D=\theta(\theta^{3}-4\theta^{2}+16).$

Note that $D>0$, for any $\theta >0$. Indeed, 

 $D>0$ iff $h(\theta):=\theta^{3}-4\theta^{2}+16>0.$ It is easy to see that 
 $$\min_{\theta>0}h(\theta)=h\left(\frac{8}{3}\right)=\frac{176}{27}>0.$$ 

One can check that $\phi_{1}(\theta)>0$ for any $\theta >0$ and  $\phi_{2}(\theta)>0$ for any $\theta >4.$

{\bf Subcase:} $x+y=\phi_1(\theta)$ for $\theta>0$. Substituting this expression into the systems of equations (\ref{f38}), for $k=2$, we obtain a quadratic equation with respect to $x$, i.e.
\begin{equation}\label{f43}
x^{2}-\phi_{1}(\theta)x+\frac{4}{\theta^{2}}=0.
\end{equation}
For discriminant of this equation we have 
$$D_{1}=\phi_{1}^{2}(\theta)-\frac{16}{\theta^{2}}.$$
$D_{1}\geq 0$ is equivalent to  $g_{1}(\theta)=\phi_{1}(\theta)-\frac{4}{\theta}\geq 0$. That is 
\begin{equation}\label{f44}
\sqrt{\theta(\theta^{3}-4\theta^{2}+16)}\geq \frac{8-\theta^{3}+2\theta^{2}}{\theta}.
\end{equation}

Let $8-\theta^{3}+2\theta^{2}\geq0.$ Then from (\ref{f44}) we get $(\theta-2)(\theta-\theta_{c}^{(3)})<0$.
The inequality $8-\theta^{3}+2\theta^{2}\geq0$ holds iff  $\theta-\theta_{\ast}^{(2)}\leq0,$ where $$\theta_{\ast}^{(2)}=\frac{1}{3}\sqrt[3]{116+12\sqrt{93}}+\frac{4}{\sqrt[3]{116+12\sqrt{93}}}+2\approx2.931.$$

Thus in the case  $8-\theta^{3}+2\theta^{2}\geq0$ the solution of (\ref{f44}) is $\theta\in\left[2,\theta_{\ast}^{(2)}\right].$

Let $8-\theta^{3}+2\theta^{2}\leq0.$ This is equivalent to  $\theta-\theta_{\ast}^{(2)}\geq0.$ Then one can see that in this case the solution of (\ref{f44}) is $\theta\in\left[\theta_{\ast}^{(2)};+\infty\right).$

Thus, the solution to the inequality (\ref{f44}) is $\theta\geq 2$, i.e. $D_{1}\geq 0$ for $\theta\geq 2$. So the equation (\ref{f43}) has two (resp. one) positive solutions for $\theta>2$ (resp. $\theta=2$):

\begin{equation}\label{f45}
x_{3}^{(4,1)}={1\over
2}\phi_{1}(\theta)+\frac{\sqrt{D_{1}}}{2},
\\ \  \  \ x_{3}^{(4,2)}={1\over
2}\phi_{1}(\theta)-\frac{\sqrt{D_{1}}}{2}.
\end{equation}

Since $x+y=\phi_1(\theta),$ we have  $$x_{3}^{(4,1)}=y_{3}^{(4,2)}, \ \ y_{3}^{(4,1)}=x_{3}^{(4,2)},$$ i.e., the solutions of the system are symmetrical: $$\left(x_{3}^{(4)},y_{3}^{(4)}\right), \ \  \left(y_{3}^{(4)},x_{3}^{(4)}\right).$$

{\bf Subcase:} $x+y=\phi_{2}(\theta)$ for $\theta>4$. In this case from (\ref{f38}), we get 
\begin{equation}\label{f46}
x^{2}-\phi_{2}(\theta)x+\frac{4}{\theta^{2}}=0.
\end{equation}
By a similar analysis as in the previous case one can prove that 
the equation (\ref{f46}) has two positive solutions iff  $\theta>\theta_{c}^{(3)}$:

\begin{equation}\label{f48}
x_{4}^{(4,1)}={1\over
2}\phi_{2}(\theta)+\frac{\sqrt{D_{2}}}{2},
\\ \  \  \ x_{4}^{(4,2)}={1\over
2}\phi_{2}(\theta)-\frac{\sqrt{D_{2}}}{2}.
\end{equation}

By symmetry these solutions have the form  $$\left(x_{4}^{(4)},y_{4}^{(4)}\right), \ \ \left(y_{4}^{(4)},x_{4}^{(4)}\right). $$
\end{proof}

These solutions generate only two distinct 4-periodic (different from 2- and 3- periodic ones) sequences:

\begin{equation}\label{too}
\theta\geq \theta_{c}^{(3)}: \ \	\begin{array}{ll}
		\dots, 1, x_{3}^{(4)}, 1, y_{3}^{(4)}, 1, x_{3}^{(4)}, 1, y_{3}^{(4)}, 1, \dots\\[2mm]
		\dots, 1, x_{4}^{(4)}, 1, y_{4}^{(4)}, 1, x_{4}^{(4)}, 1, y_{4}^{(4)}, 1, \dots
	\end{array}
\end{equation}
{\bf CASE} $z\ne 1$. Denoting $u=\sqrt[k]{x}$, $v=\sqrt[k]{z}$, $w=\sqrt[k]{y}$ by (\ref{fa}) we get 

\begin{equation}\label{far}
	\left\{
	\begin{array}{lll}
		u=\frac{\theta u^k+v^k+1}{\theta+u^k+w^k}, \\[3mm]
		v=\frac{u^k+\theta v^k+w^k}{\theta+u^k+w^k}, \\[3mm]
		w=\frac{1+v^k+\theta w^k}{\theta+u^k+w^k}.
	\end{array}
	\right.\end{equation}

Consider the case $k=2$ then from the second equation of 
(\ref{far}) we get (since $v\ne 1$)
\begin{equation}\label{v}v-1={\theta(v^2-1)\over \theta +u^2+w^2} \ \ \Leftrightarrow \ \
 \theta +u^2+w^2=\theta (1+v).\end{equation}
 From the first and last equations of (\ref{far}) we obtain
 \begin{equation}\label{uw}
 	u-w={\theta(u^2-w^2)\over \theta +u^2+w^2} \ \ \Leftrightarrow \ \ u=w \ \ \mbox{or} \ \
 \theta +u^2+w^2=\theta (u+w).
 \end{equation}
\begin{equation}\label{uu}
	u+w={\theta(u^2+w^2)+2(1+v^2)\over \theta +u^2+w^2}.
\end{equation}

{\bf Case $u=w$}. In this case the system is in the form
\begin{equation}\label{fu}
	\left\{
	\begin{array}{lll}
		u=\frac{\theta u^2+v^2+1}{\theta+2u^2}, \\[3mm]
		v=\frac{2u^2+\theta v^2}{\theta+2u^2}.
		\end{array}
	\right.\end{equation}
By (\ref{v}) for $u=w$ from (\ref{fu}) we get 
$$v={2\over \theta}\left({2+\theta^2 v+2v^2\over 2\theta(1+v)}\right)^2.$$
That is 
\begin{equation}\label{q4}
4v^4-2\theta^2(\theta-2)v^3+(\theta^4-4\theta^3+8)v^2-2\theta^2(\theta-2)v+4=0.
\end{equation}
Introduce 
$$\xi=v+{1\over v}>2.$$ Then from (\ref{q4}) we get 
\begin{equation}\label{x4}
	4\xi^2-2\theta^2(\theta-2)\xi+\theta^3(\theta-4)=0.
\end{equation}
Solutions of which are 
$$\xi_{1,2}={\theta\over 4}\left(\theta^2-2\theta\mp\sqrt{\theta(\theta^3-4\theta^2+16)}\right).$$
We should have $\xi_i>2$. Recall $\theta_c^{(3)}$ defined by (\ref{the}).
A numerical analysis shows that 
$$\xi_1>2, \ \ \mbox{iff} \ \ \theta>\theta_{c}^{(3)},$$
$$\xi_2>2, \ \ \mbox{iff} \ \ \theta>2.$$
If $\xi_i>2$ then we denote by $\hat v_i$ and $1/\hat v_i$ two solutions of $v+1/v=\xi_i$. Then corresponding $\hat u_i$ can be found from equality 
$2\hat u_i^2=\theta v_i$, i.e., denote $x_i=\hat u_i^2={\theta\over 2}\hat v_i$ and $\hat x_i={\theta\over 2\hat v_i}$. 
Thus we obtain the following 4-periodic sequences
\begin{equation}\label{41}
\theta>\theta_{c}^{(3)}: \begin{array}{ll}
	\dots, 1, x_1, \hat v^2_1, x_1, 1, x_1, \hat v^2_1, x_1, 1, \dots\\[2mm]
		\dots, 1, \hat x_1, \hat v^{-2}_1, \hat x_1, 1, \hat x_1, \hat v^{-2}_1, \hat x_1, 1, \dots\\[2mm]
	\end{array}
\end{equation}
\begin{equation}\label{42}
\theta>2: \begin{array}{ll}
	\dots, 1, x_2, \hat v^2_2, x_2, 1, x_2, \hat v^2_2, x_2, 1, \dots\\[2mm]
	\dots, 1, \hat x_2, \hat v^{-2}_2, \hat x_2, 1, \hat x_2, \hat v^{-2}_2, \hat x_2, 1, \dots\\[2mm]
\end{array}
\end{equation}

{\bf Case $u\ne w$.}
Using (\ref{v}) and (\ref{uw}) (for $u\ne w$) from (\ref{uu}) we get
\begin{equation}\label{ua}
\begin{array}{ll}
	u+w=1+v\\[2mm]
		u+w={\theta^2v+2(1+v^2)\over \theta(1+v)}.
		\end{array}
\end{equation}
Thus 
\begin{equation}\label{ub}
	1+v={\theta^2v+2(1+v^2)\over \theta(1+v)} \ \ \Leftrightarrow \ \ (\theta-2)(v^2-\theta v+1)=0.
	\end{equation}
The last equation has infinitely many solutions (i.e. $\forall v>0$) if $\theta=2$ and two solutions 
\begin{equation}\label{v12}v_{1,2}={1\over 2}(\theta\pm \sqrt{\theta^2-4}), \ \ \mbox{if} \ \ \theta\geq 2.
	\end{equation}

If $\theta=2$ then for each $v>0$ using (\ref{v}) from the first equation of (\ref{far}) (with $k=2$) we get 
$$2u^2-2(1+v)u+v^2+1=0 \ \ \Leftrightarrow \ \ (u-1)^2+(u-v)^2=0.$$
The last equation has unique solution $u=v=1$. For these values we get $w=1$ too.

If $\theta> 2$ for a given solution $v$ to find corresponding $u$ and $w$ we use (\ref{v}) and from the first equation of (\ref{far}) we get
$$\theta u^2-\theta (1+v)u+v^2+1=0.$$ 
Since $v^2+1=\theta v$ we have 
$$\theta u^2-\theta (1+v)u+\theta v=0 \ \ \Leftrightarrow \ \  u^2-(1+v)u+ v=0.$$
This has solutions  $u=1$ and $u=v$. Similarly, from the third equation of (\ref{far}) we get $w=1$ and $w=v$.
Hence solutions $(u,v,w)$ under conditions $u\ne w$, $v\ne 1$ have the form 
\begin{equation}\label{v1}(1, v_1, v_1), (v_1, v_1, 1), 
(1, v_2, v_2), (v_2, v_2, 1).
\end{equation}
Going back to (\ref{xyz}) we note that solutions (\ref{v1}), with $v_i$ defined in (\ref{v12}), generate only two distinct sequences
\begin{equation}\label{to}
 \theta\geq 2:\ \	\begin{array}{ll}
	\dots, 1, 1, v_1^2, v_1^2, 1, 1, v_1^2, v_1^2, 1, 1, v_1^2, v_1^2, \dots\\[2mm]
	\dots, 1, 1, v_2^2, v_2^2, 1, 1, v_2^2, v_2^2, 1, 1, v_2^2, v_2^2, \dots
\end{array}
\end{equation}
By Lemma \ref{4b}, Remark \ref{rk}, above obtained solutions and Theorem \ref{KF} we get the following result. 
\begin{thm} For the model (\ref{f1}) with graph $G$ given in Fig. \ref{fi}, the number $\nu_4(k,\theta)$ of 4-height-periodic GGMs (which are NOT 2- and 3-periodic) is given by the following formula 
	$$\nu_4(2,\theta)=\left\{\begin{array}{lllll}
		0, \ \ \mbox{if} \ \ \theta\leq 2\\[2mm]
		5, \ \ \mbox{if} \ \ \theta\in (2,\theta_{c}^{(3)})\\[2mm]
		6,  \ \ \mbox{if} \ \ \theta=\theta_{c}^{(3)}\\
		8,  \ \ \mbox{if} \ \ \theta>\theta_{c}^{(3)}
	\end{array}\right.
	$$
\end{thm}
\begin{rk}
	Some our results can be generalized for $k\geq 3$. For example, if $u=1$ and $v=w$ then system (\ref{far}) is reduced to $v^{k+1}-(\theta+1)v^k+(\theta+1)v-1=0$. This equation is very similar to (\ref{ab3}).
\end{rk}
\section*{Acknowledgements}
The work supported by the fundamental project (number: F-FA-2021-425)  of The Ministry of Innovative Development of the Republic of Uzbekistan.
Rozikov thanks Institut des Hautes \'Etudes Scientifiques (IHES), Bures-sur-Yvette, France and the IMU-CDC for support of his visit to IHES. We thank the referees for carefully reading manuscript and useful comments.

\section*{Statements and Declarations}
	
{\bf	Conflict of interest statement:} 
On behalf of all authors, the corresponding author (U.A.Rozikov) states that there is no conflict of interest.

\section*{Data availability statements}
The datasets generated during and/or analysed during the current study are available from the corresponding author on reasonable request.

\end{document}